\newtheorem{claim}{Claim}
\newtheorem{lemma}[claim]{Lemma}
\newtheorem{theorem}[claim]{Theorem}
\newcommand{\argmax}{\text{argmax}}
\newcommand{\prob}[1]{\text{Pr}\left [#1\right]}
\newcommand{\ex}[2]{\mathbb{E}_{#1}\left[#2\right]}
\newcommand{\condex}[3]{\ex{#1}{#2\,|\,#3}}
\begin{document}
\title{An Improved Welfare Guarantee for First Price Auctions\footnote{The authors would like to thank Vasilis Syrgkanis, Jason Hartline, and Brendan Lucier for helpful discussions.}}
\author{Darrell Hoy\footnote{Tremor Technologies} \and Sam Taggart\footnote{Oberlin College} \and Zihe Wang\footnote{Shanghai University of Finance and Economics}}
\maketitle

\begin{abstract}
	This paper proves that the welfare of the first price auction in Bayes-Nash equilibrium is at least a $.743$-fraction of the welfare of the optimal mechanism assuming agents' values are independently distributed. The previous best bound was $1-1/e\approx.63$, derived in \cite{ST13} using smoothness, the standard technique for reasoning about welfare of games in equilibrium. In the worst known example (from \cite{HHT14}), the first price auction achieves a $\approx.869$-fraction of the optimal welfare, far better than the theoretical guarantee. Despite this large gap, it was unclear whether the $1-1/e\approx.63$ bound was tight. We prove that it is not. Our analysis eschews smoothness, and instead uses the independence assumption on agents' value distributions to give a more careful accounting of the welfare contribution of agents who win despite not having the highest value.
\end{abstract}

\newcommand{\ihigh}{\mathcal E_i(v_i)}
\newcommand{\utilhigh}{u_i(v_i\,|\,\ihigh)}
\newcommand{\allochigh}{x_i(v_i\,|\,\ihigh)}
\newcommand{\bidutilhigh}{\tilde u_i(b\,|\,\ihigh)}
\newcommand{\bidallochigh}{\tilde x_i(b\,|\,\ihigh)}
\newcommand{\bidutil}{\tilde u_i(b)}
\newcommand{\bidalloc}{\tilde x_i(b)}
\newcommand{\vjlb}{\phi(v_i,q_i,b_i,b_j)}
\newcommand{\vistarlb}{\underline v(v_i,q_i,b_i,b_{i^*})}
\newcommand{\utilqhigh}{u_i(v_i(q_i)\,|\,\mathcal E_i(v_i(q_i)))}

\section{Introduction}

There has been a wealth of recent progress in mechanism design understanding the performance of resource allocation mechanisms through worst-case analysis. Such results, often termed \emph{price of anarchy} results, typically bound the expected welfare in equilibrium of a mechanism without requiring the theorist to solve analytically for the equilibria being studied. 

The first price auction is a canonical example of the successes of such an approach. It has been known since \cite{ST13} that the first-price auction is guaranteed in equilibrium to achieve at least a $\tfrac{e-1}{e}$ fraction of optimal social welfare. This bound was shown to be tight in the case of Bayes-Nash equilibrium when bidders' values are correlated \citep{S14}, as well as in the simultaneous composition of multiple first-price auctions with submodular and subadditive valuations\citep{Christodoulou2016TightBF}. 

However, in the most commonly studied setting with independently distributed values over a single good, there has been a persistent gap between this welfare guarantee and the worst known example of \citet{HHT14}, in which the social welfare is a $.869$-fraction of the optimal social welfare. Despite the prevalence of the first-price auction format, the salience of the independent values assumption, and the ubiquity of the techniques used to prove the existing $\tfrac{e-1}{e}$ bound, it was not clear whether this bound was tight in the independent values setting.

In this paper, we prove the following theorem:
\begin{theorem}
	\label{thm:main}
	The expected welfare of any Bayes-Nash equilibrium of the first-price auction with independently-distributed bidders is at least a $.743$-fraction of the optimal welfare.
\end{theorem}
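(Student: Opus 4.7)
My plan is to refine the standard smoothness analysis by separately tracking two sources of welfare: the utility-plus-revenue extracted from the bidder who actually wins, and the residual value left on the table by the highest-value bidder when it loses. The standard argument of \cite{ST13} handles the first term via a fractional-bid deviation and simply discards the second, yielding $(1-1/e)$. Because values are independent, I can produce a nontrivial lower bound on the second term and add it to the analysis, which is what should drive the constant from $.63$ up to $.743$.

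For the first piece, fix bidder $i$ and value $v_i$, and condition on the event $\ihigh$ that $i$ has value $v_i$ and every other bidder has value below $v_i$; by independence, conditioning on $\ihigh$ factorizes into separate conditions on each $v_j$. On this event I apply the equilibrium deviation inequality $\utilhigh \ge \bidutilhigh$ for a family of candidate bids $b$, parameterized by a quantile $q_i$ of an auxiliary deviation distribution. Integrating against a carefully chosen density on $q_i$ produces a lower bound on $\utilqhigh$ of the form $\alpha\, v_i\, \allochigh$ minus a payment term, analogous to the standard smoothness inequality but with the flexibility to tune the deviation density to the conditional structure revealed by independence.

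For the second piece, consider a realization in which some $i\neq i^*$ wins with bid $b_i$. Then $i^*$'s equilibrium bid $b_{i^*}$ must lie below $b_i$; by monotonicity of the equilibrium bidding strategy and the independence of $v_{i^*}$ from $v_i$, I can derive a parametric lower bound $\vistarlb$ on $v_{i^*}$ in terms of $(v_i, q_i, b_i, b_{i^*})$ and the observed conditional bid distribution (and more generally $\vjlb$ on each losing $v_j$). Integrating $\vistarlb$ against the joint distribution yields a lower bound on $\mathbb{E}[v_{i^*}\,\mathbbm{1}\{i^*\text{ loses}\}]$, a quantity smoothness bounds crudely by zero. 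Summing the two contributions across bidders and invoking the identity that total utility plus total revenue equals welfare gives a system of inequalities lower-bounding $\mathrm{WEL}/\mathrm{OPT}$.

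The final step, which I expect to be the main obstacle, is to jointly optimize the quantile density used in the first piece and the parametric bound $\vistarlb$ in the second against the worst-case joint distribution of values. The argument will be tight only when the two contributions are simultaneously extremal, so the density and the bound must be tuned to the structure of worst-case equilibria in the spirit of the \citet{HHT14} example, where a single high-value bidder is shaded against many lower-value competitors. Reducing this tuning to a finite-dimensional minimax, and verifying analytically or numerically that the resulting ratio equals $.743$, is the delicate part; the independence assumption enters not just in defining the conditional events but crucially in guaranteeing that the lower bound $\vistarlb$ holds uniformly over the marginals.
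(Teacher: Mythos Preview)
Your second piece has the roles of the winner and the highest-value bidder reversed, and as a result targets the wrong quantity. In the paper's notation $i^*$ is the \emph{winner} (the highest bidder), while $i$ is the highest-value agent on the event $\ihigh$. When $i^*\neq i$ it is $b_{i^*}$ that lies \emph{above} $b_i(v_i)$, not below. The quantity that must be lower bounded is the winner's value $v_{i^*}$ on the event $i^*\neq i$, since this is precisely the mechanism's welfare when the highest-value agent loses; the quantity $\mathbb{E}[v_{i^*}\mathbbm{1}\{i^*\text{ loses}\}]$ you propose to bound (with your $i^*$ the highest-value agent) is not a component of welfare at all, and lower bounding it does not combine with the utility-plus-revenue identity in any useful way. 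Smoothness does not ``bound it crudely by zero''; it simply never appears in that decomposition.

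More importantly, even after correcting the direction, your proposed derivation of $\vistarlb$ via ``monotonicity of the equilibrium bidding strategy'' cannot produce anything sharper than $v_{i^*}\ge b_{i^*}$, which is exactly the revenue-based bound that gives $1-1/e$. The paper's improvement rests on a different mechanism (Lemma~\ref{lem:vjlowerbound}): write down the \emph{pair} of best-response inequalities $(v_i-b_i)\tilde x_i(b_i)\ge(v_i-b_j)\tilde x_i(b_j)$ and $(v_j-b_j)\tilde x_j(b_j)\ge(v_j-b_i)\tilde x_j(b_i)$, divide one by the other, and use independence to cancel the common factors $\prod_{k\neq i,j}B_k(\cdot)$. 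What survives is a lower bound on $v_{i^*}$ that interpolates between $b_{i^*}$ (at $q_i=0$) and $v_i$ (at $q_i=1$), as a function of the quantile $q_i=F_i(v_i)$ of the highest-value agent. The proof then substitutes the threshold-quantile bound of Lemma~\ref{lem:threshquantile} for $b_{i^*}$, integrates over $q_i$ via integration by parts against $\gamma_i(q_i)=\prob{\mathcal E_i(v_i(q_i))}v_i(q_i)$, and numerically minimizes the resulting one-dimensional ratio $\int_x^1\ell(t)\,dt/(1-x)$; there is no joint optimization of a deviation density against worst-case distributions as you describe. Without the divide-and-cancel step you have no handle on $v_{i^*}$ beyond its bid, and the argument cannot move past $1-1/e$.
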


This improves on previous best bound of $\tfrac{e-1}{e}\approx .63$ discussed above.

\subsection{Approach}
The standard approach to proving worst-case bounds on the welfare of auctions is via \emph{smoothness}, a technique formalized in \cite{R09}, and developed for mechanisms in \cite{ST13}. Using a deviation-based argument, smoothness breaks the expected welfare of a mechanism into two parts: the contribution from agents' utility and the revenue of the auction. A welfare guarantee follows from deriving a tradeoff lowerbounding the sum of these two quantities

To prove Theorem~\ref{thm:main}, we instead consider instead a different pair of quantities: the welfare from ``rightful'' winners (that is, winners who win and have the highest value) and that from those who the mechanism improperly allocates (that is, agents who win despite not having the highest value). One way to derive a tradeoff between these two quantities would be to lower bound the value of an improper winner by their payments made to the mechanism. The alternative derivation of the $\tfrac{e-1}{e}$ bound in \cite{HHT14} implicitly follows this approach.

To prove a guarantee stronger than $\tfrac{e-1}{e}$, we derive a sharper lower bound on the value of improperly allocated agents. Rather than taking these agents' payments as a lower bound on their values, we show that improper winners beating other, higher-valued agents must be facing a similar optimization problem to the rightful winners. Since these improper winners are also bidding higher than the rightful winners, it must be that their values are not too much lower than the values of those who should win. Carefully aggregating this lower bound over all types of all agents yields Theorem~\ref{thm:main}. Notably, our lower bounding argument makes use of the independence assumption on agents' value distributions, unlike the proofs that yielded the previous results.

\subsection{Structure}
In Section~\ref{sec:related}, we discuss our relationship to the literature on first-price auctions and worst-case welfare analysis in mechanism design. We then outline the formal model and technical preliminaries in Section~\ref{sec:prelims}. In the subsequent three sections, we present the proof of our main result. We begin by re-deriving the $\tfrac{e-1}{e}$ bound from \cite{ST13} in a way that will allow for direct comparison to our new approach while developing several technical lemmas that will be useful in proving the tighter welfare bound. In Section~\ref{sec:misalloc}, we present the sharper lower bound on the values of improperly allocated agents. Finally, in Section~\ref{sec:mainproof}, we show how to combine the lower bound in Section~\ref{sec:misalloc} with the approach outlined in Section~\ref{sec:warmup} to prove Theorem~\ref{thm:main}.

\section{Related Work}
\label{sec:related}

The first-price auction presents a daunting obstacle to classical economic analysis because equilibria in the first-price auction are notoriously difficult to compute theoretically. For example, when \cite{V61} first observed that first-price auctions have equilibria which are not welfare-optimal, computing the equilibria of two bidders with asymmetric uniform distributions was posed as an open question. This question was only answered fifty years later in \cite{KZ12}. While other literature in economics has sought to understand equilibria in other special cases (e.g. \cite{plum1992characterization}), a broader view has proved elusive. Indeed, for the related setting of several first-price auctions run simultaneously, \cite{cai2014simultaneous} show that there are computational reasons why such equilibria are challenging to characterize.

To circumvent this obstacle, recent analyses have resorted to worst-case analysis over all equilibria. These approaches allowed the theorist to reason about equilibrium without ever solving for one. The most prevalent tool for this task is \emph{smoothness}, pioneered in \cite{roughgarden2015intrinsic} and adapted for auctions by \cite{S14} and \cite{ST13}. In the case of the first-price auction, this approach has yielded the best known bound of $\tfrac{e-1}{e}$, mentioned in the introduction. This bound also extends to the simultaneous composition of multiple first-price auctions \citep{ST13} when bidders have fractionally subadditive valuations. Smoothness has seen applications far beyond the first price auction; we refer the reader to the excellent survey of \cite{RST16} for a broader picture.

The $\frac{e-1}{e}$ bound however has only been shown to be tight in two situations: the single-item setting when agents' values are correlated \citep{S14}, and in the multi-item case with the simultaneous composition of multiple first-price auctions 
\citet{Christodoulou2016TightBF}.

For revenue, \cite{HHT14} show that the smoothness-based welfare bounds for the first-price auction extend to the objective of revenue as well, assuming the auctioneer has recruited sufficient competition or sets proper reserve prices.

Our approach falls outside the standard smoothness framework. In the space of mechanism design, it is one of few such welfare bounds which are not smoothness-based. We mention two others of note. First, \cite{BL10} consider combinatorial auctions, and prove that mechanisms based on greedy allocation algorithms have equilibria which are approximately welfare-optimal. Second, \cite{christodoulou2015efficiency} consider full-information equilibria of the all-pay auction, and provide tight worst-case welfare guarantees.
\section{Technical Preliminaries}
\label{sec:prelims}

We now lay the formal groundwork for our result. This paper analyzes the \emph{single-item sealed-bid first-price auction}. In such an auction a single item is sold to $n$ agents. Each agent $i$ simultaneously submits a bid $b_i$ to the auctioneer. The agent $i^*$ with the highest bid wins the item, and pays their bid $b_{i^*}$. All other agents pay nothing and win nothing. Let $\tilde x_i(\mathbf b)$ denote the indicator for whether agent $i$ is allocated under bid profile $\mathbf b$, and let $\tilde p_i(\mathbf b)$ denote the payments made by agent $i$ under that same bid profile. Each agent evalutes their allocation and payment using the linear utility function $\tilde u_i(\mathbf b)=v_i\tilde x_i(\mathbf b) - \tilde p_i(\mathbf b)$, where $v_i$ is agent $i$'s value for service, also called their type.

We consider a Bayesian environment, in which each agent $i$'s value is drawn independently from a distribution with CDF $F_i$ and density $f_i$. Note that we do not require agents' value distributions to be identical. We assume agents' values are private, but that the prior distributions are common knowledge. Given a value $v_i$ for agent $i$, we will sometimes refer to $v_i$'s strength in $i$'s distribution by the \emph{quantile} $q_i(v_i)=F_i(v_i)$ of that value.\footnote{Note that this definition of quantile gives strong values high quantiles. This differs from the definition of quantile often used in the literature on revenue maximization in auctions. The reason for this difference is consistency: we discuss quantiles of other random variables, in which it is standard to define quantiles such that high values have high quantile.}

We adopt the standard solution concept of Bayes-Nash equilibrium (BNE). Informally, a BNE is a strategy mapping $b_i(\cdot)$ from values to bids for every agent such that each agent's bid given their value maximizes their expected utility given the strategies of other bidders. Formally, given a profile $\mathbf b(\cdot)$ of bidding strategies for each agent, define the interim allocation probability of agent $i$ bidding $b$ to be $\tilde x_i(b)=\ex{\mathbf v_{-i}}{\tilde x_i(b,\mathbf b_{-i}(\mathbf v_{-i}))}$. Similarly, define the interim expected payments of agent $i$ to be $\tilde p_i(b)=\ex{\mathbf v_{-i}}{\tilde p_i(b,\mathbf b_{-i}(\mathbf v_{-i}))}$. Define the interim expected utility $\tilde u_i(b)$ similarly. A profile of bidding strategies $\mathbf b(\cdot)$ is a BNE if for every agent $i$ with value $v_i$, the following best response inequality holds for every alternate bid $b$: $\tilde u_i(b_i(v_i))\geq \tilde u_i(b)$.

In what follows, we argue assuming agents are bidding according to an arbitrary BNE profile of bidding strategies. Since the strategies map values to bids and bids are mapped to allocation and payments, we will often consider allocations, payments, and utiltities as a function of value, taking the bid functions as implicit. Formally, we will let $x_i(v_i)=\ex{\mathbf v_{-i}}{\tilde x_i(b_i(v_i),\mathbf b_{-i}(\mathbf v_{-i}))}$, $p_i(v_i)=\ex{\mathbf v_{-i}}{\tilde p_i(b_i(v_i),\mathbf b_{-i}(\mathbf v_{-i}))}$, and $u_i(v_i)=\ex{\mathbf v_{-i}}{\tilde u_i(b_i(v_i),\mathbf b_{-i}(\mathbf v_{-i}))}$. Note that we use tildes when the argument to the function is a bid, and omit the tildes when an argument to the function is a value instead.

We study the objective of utilitarian social welfare. The social welfare of a BNE is the expected value of the winner. In other words, $\text{WELF}(\mathbf b(\cdot))=\sum_i v_ix_i(v_i)$. As our benchmark, we compare to the expected value of the bidder with the highest value, i.e. $\ex{\mathbf v}{\max_i v_i}$. This is the welfare of the mechanism which always allocates the highest-valued agent, which can be achieved in equilibrium by a second-price auction. We will state our performance guarantees for the first-price auction as a fraction of this benchmark welfare.

Finally, we note that it will be useful to consider allocation probabilities, expected payments, and expected utility in smaller probability spaces, conditioning, for example, on agent $i$ having the highest value with value $v_i$. Given such an event $\mathcal E$, we will use the shorthand $\tilde x_i(b\,|\,\mathcal E)=\condex{\mathbf v_{-i}}{\tilde x_i(b,\mathbf b_{-i}(\mathbf v_{-i}))}{\mathcal E}$ to denote the allocation probability of agent $i$ given a bid of $b$ conditioned on this event, and so on for payments and utilities.
\section{Warmup: Proving the Standard Bound}
\label{sec:warmup}

Before proving Theorem~\ref{thm:main}, we rederive the $\tfrac{e-1}{e}$ bound of \cite{ST13} in order to highlight the ways in which the proof of our main result differs. In doing so, we will derive several lemmas which will be useful in the proof of the tighter bound. Formally, we will re-prove the following theorem:

\begin{theorem}[\cite{ST13,HHT14}]\label{thm:oldbound}
The expected welfare of any Bayes-Nash equilibrium of the first-price auction with independently-distributed bidders is at least a $\tfrac{e-1}{e}$-fraction of the optimal welfare.
\end{theorem}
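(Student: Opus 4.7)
The plan is to decompose the optimal welfare by the identity and value of the highest-valued agent, and match each such event to a lower bound on the equilibrium welfare on that event. Writing $\ihigh$ for the event that agent $i$ has the highest value and that value equals $v_i$, we have
\[\ex{\mathbf v}{\max_i v_i} \;=\; \sum_i \int_0^\infty v_i\, \prob{\ihigh}\, f_i(v_i)\,dv_i,\]
so it suffices to prove, for every $i$ and $v_i$, the pointwise inequality
\[v_i\,\allochigh \;+\; b_i(v_i)\bigl(1-\allochigh\bigr) \;\geq\; \Bigl(1-\tfrac{1}{e}\Bigr)\,v_i,\]
and to check that the left-hand side is a lower bound on the conditional expected welfare on $\ihigh$.

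Checking the left-hand side is a welfare lower bound is straightforward. Conditional on $\ihigh$, either $i$ wins---a \emph{rightful} winner contributing $v_i$---or some agent $j\neq i$ wins---an \emph{improper} winner. In the improper case $j$ outbids $i$, so $b_j\geq b_i(v_i)$, and individual rationality in a first-price auction gives $v_j\geq b_j\geq b_i(v_i)$. Weighting the two cases by $\allochigh$ and $1-\allochigh$ yields the displayed expression.

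The work is in the pointwise inequality. I would derive it from the BNE best-response condition $(v_i-b_i(v_i))x_i(v_i)\geq (v_i-b)\bidalloc$ that holds for every deviation bid $b$. The key observation, enabled by independence of the bidders' values, is that $\bidalloc$ and $\bidallochigh$ are linked through a multiplicative factoring of the winning event across the coordinates of $\mathbf v_{-i}$: the conditional allocation restricted to $\ihigh$ is a tractable rescaling of the unconditional one. Plugging in bids of the form $b=b_i(v_i')$ for $v_i'$ ranging over $[0,v_i]$ and integrating in $v_i'$ converts the pointwise BNE inequality into an envelope-style integral tradeoff among $b_i(v_i)$, $\allochigh$, and $v_i$; the worst case of that tradeoff is an exponential bid profile and gives slack exactly $1/e$.

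The main obstacle is bookkeeping around the conditioning: the best-response condition is an inequality on \emph{unconditional} expectations over $\mathbf v_{-i}$, whereas the tradeoff we need lives in the conditional world where $\mathbf v_{-i}$ is dominated by $v_i$. The independence assumption is what bridges the two, since it lets us factor $\prob{\ihigh}=\prod_{j\ne i} F_j(v_i)$ and isolate the contribution of a single opposing coordinate at a time. The reusable lemmas produced by this bookkeeping will be the building blocks for the sharper analysis of Section~\ref{sec:misalloc}, where the improper-winner lower bound $v_j\geq b_i(v_i)$ is replaced by a strictly stronger one.
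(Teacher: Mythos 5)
Your high-level decomposition into ``rightful'' and ``improper'' winners, conditioning on $\ihigh$, is the same as the paper's, and you correctly identify that independence of the priors is the bridge between the unconditional best-response inequality and the conditional world. But the central pointwise inequality you propose,
\[
v_i\,\allochigh + b_i(v_i)\bigl(1-\allochigh\bigr) \;\geq\; \Bigl(1-\tfrac{1}{e}\Bigr)v_i,
\]
is false and does not follow from the BNE condition. Rewriting the left-hand side using $\utilhigh=(v_i-b_i(v_i))\allochigh$ gives $v_i - \utilhigh\cdot\tfrac{1-\allochigh}{\allochigh}$, so the claim is equivalent to $\utilhigh\,\tfrac{1-\allochigh}{\allochigh}\leq v_i/e$. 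Take $b_i(v_i)=0$, $\allochigh=1/e$, $\utilhigh=v_i/e$; this is compatible with the conditional best-response constraint $\utilhigh\geq(v_i-b)\bidallochigh$ by choosing $\bidallochigh=\min\bigl(\tfrac{v_i/e}{v_i-b},1\bigr)$, yet the left-hand side of your inequality is only $v_i/e\approx0.37\,v_i$. So your proposed lower bound on the conditional welfare is strictly weaker than the $(1-1/e)v_i$ you need to prove.

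The missing idea is that $v_j\geq b_j\geq b_i(v_i)$ is too lossy a bound on the improper winner's value. When $i$ loses, the winner's bid equals agent $i$'s \emph{threshold} bid $\tau_i(\mathbf v_{-i})$, and the conditional best-response argument (the paper's Lemma~\ref{lem:threshquantile}) gives a quantile-wise lower bound $\tau_i(z,v_i)\geq v_i-\utilhigh/z$, which is typically far above $b_i(v_i)$ for quantiles $z$ near one. Integrating this over the quantile of the threshold bid (as in Lemma~\ref{lem:easylb}) yields the improper-winner contribution $v_i(1-\allochigh)+\utilhigh\ln\tfrac{\utilhigh}{v_i}$; adding the $v_i\allochigh$ term cancels the dependence on $\allochigh$ and leaves $v_i+\utilhigh\ln\tfrac{\utilhigh}{v_i}$, whose minimum over $\utilhigh\in[0,v_i]$ is exactly $(1-1/e)v_i$. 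That cancellation is what makes the argument go through; replacing the whole threshold-bid distribution with the single scalar $b_i(v_i)$ loses it.
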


To prove Theorem~\ref{thm:oldbound}, we break the welfare of the first-price auction into two quantities: the welfare from agents who win in both the first-price auction and the optimal allocation, and the contribution to welfare from agents who win in the first-price auction, but are not the highest-valued agents. More specifically, we consider the contribution for an arbitrary agent $i$ in the event that agent $i$ has value $v_i$, and that this is the highest value. We denote this event $\ihigh$. Conditioned on $\ihigh$, we will show that the expected value of the winner of the first price auction as a fraction of $v_i$. Formally, we will write the welfare in the following way:

\begin{lemma}\label{lem:welfbreakdown}
	Given any Bayes-Nash equilibrium of the first price auction, let $i^*$ be the random variable given by $i^*=\argmax_{i}\,\,b_i(v_i)$ (breaking ties arbitrarily). The expected welfare in any Bayes-Nash equilibrium of the first-price auction can be written as
	\begin{eqnarray}\label{eq:welfare}
\sum_{i=1}^n \int f_i(v_i)\prob{\ihigh}\Big(v_i x_i(v_i\,|\,\ihigh)+
\condex{}{v_{i^*}}{\ihigh,i^*\neq i}\prob{i^*\neq i\,|\,\ihigh}\Big)\,dv_i.
	\end{eqnarray}
\end{lemma}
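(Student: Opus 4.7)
The plan is to prove the decomposition as a straightforward application of the law of total expectation, partitioning the probability space according to which agent holds the highest value. Begin by writing the expected welfare as $\mathbb{E}[v_{i^*}]$, where $i^*$ is the (first-price) winner defined in the lemma. Since with probability one exactly one agent has the strict maximum value (under continuous $F_i$; ties are a measure-zero event on which the tie-breaking convention is irrelevant), this can be split as
\[
\text{WELF} \;=\; \sum_{i=1}^n \ex{\mathbf v}{v_{i^*} \,\mathbbm{1}[v_i \geq v_j\ \forall j\neq i]}.
\]

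Next, I would peel off the $v_i$ integral by conditioning on agent $i$'s value. For fixed $v_i$, the indicator $\mathbbm{1}[v_i \geq v_j\ \forall j \neq i]$ is an event depending only on $\mathbf v_{-i}$, and by the independence assumption on value distributions this is precisely the event $\ihigh$. Using the density $f_i$ and the shorthand introduced in the preliminaries, the above becomes
\[
\text{WELF} \;=\; \sum_{i=1}^n \int f_i(v_i)\,\prob{\ihigh}\,\condex{}{v_{i^*}}{\ihigh}\,dv_i.
\]

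To obtain the stated form, split the inner conditional expectation on whether $i^* = i$ or $i^* \neq i$. When $i^* = i$, the winner's value is $v_i$, contributing $v_i\,\prob{i^* = i\,|\,\ihigh}$. When $i^* \neq i$, the contribution is $\condex{}{v_{i^*}}{\ihigh, i^*\neq i}\,\prob{i^* \neq i\,|\,\ihigh}$. The key identification is that $\prob{i^* = i\,|\,\ihigh}$ is, by definition of the conditional interim allocation notation from the preliminaries, exactly $x_i(v_i\,|\,\ihigh)$: given $\ihigh$, agent $i$ is allocated precisely when $b_i(v_i) > b_j(v_j)$ for all $j \neq i$, i.e., when $i^* = i$. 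Substituting yields the formula in the lemma.

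The main work, such as it is, lies in keeping the conditional events straight: namely, distinguishing the optimal allocation event $\ihigh$ (agent $i$ has the highest value) from the first-price allocation event $\{i^* = i\}$ (agent $i$ submits the highest bid), and recognizing that the latter conditional on the former is precisely $x_i(v_i\,|\,\ihigh)$. The use of independence of $\mathbf v_{-i}$ from $v_i$ is what allows $\ihigh$ to be written as an event purely over $\mathbf v_{-i}$ once $v_i$ is fixed; everything else is bookkeeping.
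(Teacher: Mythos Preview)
Your proposal is correct and is the natural argument; the paper does not actually supply a proof of this lemma, treating the decomposition as immediate from conditioning on the identity of the highest-valued agent and then on whether that agent wins. Your write-up makes explicit exactly the bookkeeping the paper leaves implicit, including the identification $\prob{i^*=i\,|\,\ihigh}=x_i(v_i\,|\,\ihigh)$ and the role of independence in letting $\ihigh$ be an event over $\mathbf v_{-i}$ once $v_i$ is fixed.
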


Similar to the approach in the smoothness-based proof of \cite{ST13}, we use payments made by the winner to lower bound that agent's contribution to the mechanism's welfare. More formally, given a value profile $\mathbf v$, let $\tau_i(\mathbf v_{-i})$ denote agent $i$'s threshold bid. That is, $\tau_i(\mathbf v_{-i})$ is the bid of the highest bidder other than $i$; $i$ wins if and only if $b_i(v_i)\geq \tau_i(\mathbf v_{-i})$ (modulo tiebreaking). When $i$ loses the auction, the winner pays their bid, which is  $\tau_i(\mathbf v_{-i})$ by definition. Lower bounding agent $i$'s threshold bid will then translate to revenue and hence welfare. We produce such a lower bound with the following lemma:

\begin{lemma}\label{lem:threshquantile}
Let $\tau_i(z,v_i)$ be the threshold bid with quantile $z$ in the distribution of $\tau_i(\mathbf v_{-i})$ conditioned on $\ihigh$. That is, $\tau_i(z,v_i)$ is the value such that the probability that $\tau_i(\mathbf v_{-i})\leq \tau_i(z,v_i)$ is exactly $z$. Then as long as $\tau_i(z,v_i)\geq b_i(v_i)$:
\begin{equation}
\label{eq:quantilebound}
	\tau_i(z,v_i)\geq v_i-\frac{u_i(v_i\,|\,\ihigh)}{z}
\end{equation}
\end{lemma}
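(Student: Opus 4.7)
The plan is to prove the lemma via a deviation argument. Consider the alternative bid $b=\tau_i(z,v_i)$ for agent $i$ at value $v_i$. By the definition of $\tau_i(z,v_i)$ as the $z$-quantile of the conditional distribution of $\tau_i(\mathbf v_{-i})$ given $\ihigh$, bidding $b$ wins with probability exactly $z$ conditional on $\ihigh$; since in the first-price auction the winner pays their own bid, the expected utility conditional on $\ihigh$ from this deviation is $z(v_i-\tau_i(z,v_i))$. A conditional best-response comparison then yields
\[
u_i(v_i\,|\,\ihigh) \;\geq\; z\bigl(v_i-\tau_i(z,v_i)\bigr),
\]
and rearranging gives the lemma.

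The main obstacle is justifying the conditional best-response step, since the BNE condition only asserts optimality of $b_i(v_i)$ against the unconditional distribution of $\mathbf v_{-i}$, not the distribution conditional on $\ihigh$. The hypothesis $\tau_i(z,v_i)\geq b_i(v_i)$ (the deviation is upward) and the independence of value distributions are what enable this step. Using standard monotonicity of BNE bid functions in first-price auctions, conditioning on $\ihigh=\{\max_{j\neq i}v_j\leq v_i\}$ stochastically shifts $\tau_i$ downward. In fact the likelihood ratio $\Pr[\tau_i\leq b \,|\, \ihigh]/\Pr[\tau_i\leq b]$ is non-increasing in $b$: by Bayes and independence it equals
\[
\Pr[\ihigh]^{-1}\prod_{j\neq i}\Pr[v_j\leq v_i \,|\, b_j(v_j)\leq b],
\]
and each factor in the product is non-increasing in $b$ (the event $b_j(v_j)\leq b$ becomes a weaker upper bound on $v_j$ as $b$ grows, so the conditional probability that $v_j\leq v_i$ can only shrink). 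This monotone likelihood-ratio property is what transfers the unconditional best-response inequality at $b_i(v_i)$ to its conditional analogue for the upward deviation $b=\tau_i(z,v_i)\geq b_i(v_i)$, supplying the missing step and closing the argument.
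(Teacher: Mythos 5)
Your proof is correct and takes essentially the same route as the paper: the "monotone likelihood-ratio" observation that $\Pr[\ihigh]^{-1}\prod_{j\neq i}\Pr[v_j\leq v_i\,|\,b_j(v_j)\leq b]$ is non-increasing in $b$ is exactly the paper's computation that $\tilde u_i(b\,|\,\ihigh)/\tilde u_i(b)=\prod_{j\neq i}\min\bigl(1/F_j(v_i),\,1/B_j(b)\bigr)$ is decreasing in $b$, and both use it in the same way together with the unconditional best-response inequality to obtain the conditional best-response inequality for upward deviations, after which the identification of $\bidallochigh$ with the conditional CDF of $\tau_i(\mathbf v_{-i})$ and rearrangement are identical.
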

\begin{proof}
	We prove the lemma in two steps. First, we prove a best response inequality for $b_i(v_i)$ conditioned on $\ihigh$. While $b_i(v_i)$ may not maximize agent $i$'s utility conditioned on this event, we show that it does yield better utility than any higher bid. We then use this best response inequality to prove (\ref{eq:quantilebound}), by noting that $x_i(v_i\,|\,\ihigh)$ is the CDF of $\tau_i(z,v_i)$, conditioned on $\ihigh$.
	
	To prove a best response inequality conditioned on $\ihigh$, we first compare $\bidutilhigh$ to $\tilde u_i(b)$. In terms of the CDFs $B_j(\cdot)$ and $F_j(\cdot)$, we have that
		\begin{equation}
		\label{eq:utilnocond}
		\bidutil=(v_i-b)\bidalloc=(v_i-b)\prod_{j\neq i}B_j(b).
	\end{equation}
	Meanwhile, conditioned on $i$ having the highest value of $v_i$, we have
	\begin{eqnarray}
		 \notag &&\bidutilhigh=(v_i-b)\bidallochigh\\&&=(v_i-b)\prod_{j\neq i}\prob{b_j(v_j)\leq b\,|\,\ihigh}\notag \\&&=(v_i-b)\prod_{j\neq i}\min\left(\tfrac{B_j(b)}{F_j(v_i)},1\right)\label{eq:utilcond}
	\end{eqnarray}
	It is clear from (\ref{eq:utilnocond}) and (\ref{eq:utilcond}) that the ratio $\tfrac{\bidutilhigh}{\bidutil}$ is decreasing in $b$. But then for any $b>b_i(v_i)$, we have $\tfrac{\bidutilhigh}{\bidutil}\leq \tfrac{\tilde u_i(b_i(v_i)\,|\,\ihigh)}{\tilde u_i(b_i(v_i))}$. Since $b_i(v_i))\geq \bidutil$, this can only be the case if $\tilde u_i(b_i(v_i)\,|\,\ihigh\geq\bidutilhigh$. In other words, for all $b>b_i$,
	\begin{equation}\label{eq:condbr}
	u_i(v_i\,|\,\ihigh)\geq (v_i-b)\bidallochigh
	\end{equation}
	
	We now derive the bound (\ref{eq:quantilebound}). The allocation rule $\bidallochigh$ is the probability that a bid of $b$ is higher than the bids of all agents other than $i$, conditioned on the event $\ihigh$. It follows that $\bidallochigh$ can also be thought of as the CDF (i.e. the quantile function) of this threshold bid, conditioned on $\ihigh$. We may therefore write our conditional best response inequality (\ref{eq:condbr}) as:
	\begin{equation}
u_i(v_i\,|\,\ihigh)\geq (v_i-\tau_i(z,v_i))z.
\end{equation}
Rearranging yields the lemma.
\end{proof}

The lower bound of Lemma~\ref{lem:threshquantile} is stated in terms of the quantile of agent $i$'s threshold bid. We may compute expectations by integrating over quantiles to produce the following lower bound on the value of the winning agent when $i$ loses:

\begin{lemma}\label{lem:easylb}
	For any agent $i$ with value $v_i$, the following inequality holds:
	\begin{eqnarray}
\notag \condex{}{v_{i^*}}{\ihigh,i^*\neq i}\prob{i^*\neq i\,|\,\ihigh}
 \geq v_i(1-\allochigh)+\utilhigh\ln \tfrac{\utilhigh}{v_i}.
	\end{eqnarray}
\end{lemma}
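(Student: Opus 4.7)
The plan is to combine the quantile-level threshold bound of Lemma~\ref{lem:threshquantile} with the standard no-overbidding observation that, in any BNE of the first-price auction, bidders never bid above their value (overbidding is weakly dominated by bidding $0$). Since the winner $i^*$'s bid equals the threshold $\tau_i(\mathbf v_{-i})$, this yields the pointwise inequality $v_{i^*}\ge \tau_i(\mathbf v_{-i})$ on the event $i^*\ne i$, and the task reduces to lower bounding the conditional expectation of $\tau_i(\mathbf v_{-i})$ restricted to the event that $i$ loses.

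Next I would pass from an expectation over $\mathbf v_{-i}$ to an integral over the quantile of $\tau_i$ under the conditioning $\ihigh$. By the definitions in Lemma~\ref{lem:threshquantile}, $\bidallochigh$ is exactly the CDF of $\tau_i(\mathbf v_{-i})$ conditioned on $\ihigh$, so the losing event $\{\tau_i>b_i(v_i)\}$ corresponds to quantiles $z\in(\allochigh,1]$. Changing variables and applying Lemma~\ref{lem:threshquantile} (whose hypothesis $\tau_i(z,v_i)\ge b_i(v_i)$ holds precisely on this range) gives
\begin{align*}
\condex{}{v_{i^*}}{\ihigh,i^*\ne i}\prob{i^*\ne i\,|\,\ihigh}
&\geq \int_{\allochigh}^{1}\tau_i(z,v_i)\,dz \\
&\geq \int_{\allochigh}^{1}\left(v_i-\frac{\utilhigh}{z}\right)dz \\
&= v_i(1-\allochigh)+\utilhigh\ln\allochigh.
\end{align*}

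To arrive at the form in the statement, I would then weaken $\ln\allochigh$ to $\ln(\utilhigh/v_i)$. This is valid because $\utilhigh=(v_i-b_i(v_i))\allochigh\le v_i\allochigh$, so $\allochigh\ge \utilhigh/v_i$, and since $\utilhigh\ge 0$ and $\ln$ is monotone, $\utilhigh\ln\allochigh\ge \utilhigh\ln(\utilhigh/v_i)$.

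The main obstacle is the change of variables in the middle paragraph, together with the careful bookkeeping of which quantile range corresponds to the event $i^*\ne i$; once this is set up, the integration is immediate and the final relaxation is a one-line monotonicity argument. The no-overbidding reduction at the outset is essential: without replacing $v_{i^*}$ by the threshold $\tau_i$, agent $i$'s own best-response inequality (which is all that Lemma~\ref{lem:threshquantile} packages) could not be brought to bear on the winner's value.
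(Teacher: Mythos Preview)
Your proposal is correct and follows essentially the same route as the paper: reduce $v_{i^*}$ to $\tau_i(\mathbf v_{-i})$ via no-overbidding, integrate the quantile lower bound of Lemma~\ref{lem:threshquantile} over $z\in[\allochigh,1]$, and then relax $\ln\allochigh$ to $\ln(\utilhigh/v_i)$ using $\utilhigh\le v_i\,\allochigh$.
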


The proof of the lemma requires integrating over all quantiles of agent $i$'s threshold bid, and noting that this threshold bid is a lower bound on the value of $i^*$. We defer the proof to the appendix.

\begin{proof}[Proof of Theorem~\ref{thm:oldbound}]
Plugging the lower bound of Lemma~\ref{lem:easylb} into (\ref{eq:welfare}), we have that the expected welfare from the first-price auction is at least:
\begin{align*}\notag 
\sum_{i=1}^n& \int f_i(v_i)\prob{\ihigh}\Big(\allochigh +
v_i(1-\allochigh)+\utilhigh\ln \tfrac{\utilhigh}{v_i}\Big)\,dv_i\\
&=\sum_{i=1}^n \int f_i(v_i)\prob{\ihigh}\Big(
v_i+\utilhigh\ln \tfrac{\utilhigh}{v_i}\Big)\,dv_i\\
&\geq \sum_{i=1}^n \int f_i(v_i)\prob{\ihigh}\Big(
(1-1/e)v_i \Big)\,dv_i.
\end{align*}
The final line comes from minimizing $v_i+\utilhigh\ln \tfrac{\utilhigh}{v_i}$ over all possible values of $\utilhigh$ between $0$ and $v_i$. We therefore have the following lower bound on the BNE welfare of the first-price auction:
\begin{equation*}
(1-1/e)\sum_{i=1}^n \int f_i(v_i)\prob{\ihigh}v_i\, dv_i.
\end{equation*}
The result follows from noting that the sum in the above expression is exactly the optimal expected welfare.
\end{proof}

\paragraph{Discussion:}

The proof strategy above uses the same principles as other existing welfare guarantees for the first price auction, e.g. \cite{ST13}: it combines a contribution to the welfare of the highest-valued agent when they win with a lower bound on the welfare when they lose. The lower bound on the welfare when $i$ loses takes the form of revenue, which can be lower bounded via a best response argument as above (or in \cite{HHT14}, or via a deviation-based smoothness argument, as in \cite{ST13}.

Both the proof from \cite{ST13} and that of \cite{HHT14} can be formulated in a way that does not require the assumption that agents' values are independently distributed. The proof above, on the other hand, uses the independence assumption in the proof of Lemma~\ref{lem:threshquantile}. The conditioning on the event $\ihigh$ in the welfare breakdown in Lemma~\ref{lem:welfbreakdown} makes this necessary. As we will see below, this conditioning will prove necessary to get a bound better than $\tfrac{e-1}{e}$.
\section{Value Lower Bounds for Misallocated Agents}
\label{sec:misalloc}

The proof of Theorem~\ref{thm:oldbound} broke the welfare into the contribution of the event $\ihigh$ for all agents $i$ and values $v_i$. This welfare contribution was in turn broken into two parts: the value from $i$ when they win in the first-price auction, and the value from the winner $i^*$ when $i$ does not win. In the proof of Theorem~\ref{thm:oldbound}, the value from $i^*$ was lower bounded by that agent's payments. In this section, we take advantage of the independent values assumption to prove a sharper characterization of the value contribution from $i^*$.

To attain this sharper bound, we find symmetry in the bidding problems $i$ and $j$ face through their common opponents: everyone else. Factoring out the symmetric aspects of outbidding the other agents allows us to lowerbound the utility $j$ would receive from deviating to bidding $b_i$ and hence the utility they receive in equilibrium bidding $b_j$. Lemma~\ref{lem:vjlowerbound} captures this lowerbound. The final bound comes then from integrating numerically over all quantiles $q_i$.





We formalize this intuition below.
\begin{lemma}\label{lem:vjlowerbound}
	Let agent $i$ have value $v_i$, quantile $F_i(v_i)=q_i$. Then in any Bayes-Nash equilibrium, for any agent $j$ with value $v_j$, if $b_j(v_j)\geq b_i(v_i)$, then:
	\begin{equation}\label{eq:vjlowerbound}
	v_j\geq v_i \frac{\frac{b_j(v_j)}{v_i} - (1-q_i)\frac{b_j(v_j)}{v_i}\frac{b_i(v_i)}{v_i}-q_i\frac{b_i(v_i)}{v_i} }{1-q_i-\frac{b_i(v_i)}{v_i} + \frac{b_j(v_j)}{v_i}q_i}
	\end{equation}
\end{lemma}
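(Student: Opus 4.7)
The plan is to combine best-response inequalities for both agents $i$ and $j$, after factoring out the symmetric contribution of the remaining $n-2$ opponents. Writing $B_k(b) = \Pr[b_k(v_k) \le b]$ for the CDF of each opposing bid, let $G(b) = \prod_{k \neq i, j} B_k(b)$. Then the interim utilities decompose as $\tilde u_i(b) = (v_i - b) B_j(b) G(b)$ and $\tilde u_j(b) = (v_j - b) B_i(b) G(b)$, so $G$ appears symmetrically in both bidders' optimization problems. I will also use the standard consequence of monotonicity of the equilibrium bidding function that $B_i(b_i(v_i)) = F_i(v_i) = q_i$. Throughout, write $b_i = b_i(v_i)$ and $b_j = b_j(v_j)$.

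Next, I would invoke agent $j$'s best response at $b_j$ against the deviation to $b_i$:
\[
(v_j - b_j)\, B_i(b_j)\, G(b_j) \;\ge\; (v_j - b_i)\, q_i\, G(b_i),
\]
and drop the factor $B_i(b_j) \le 1$ on the left to get $(v_j - b_j) G(b_j) \ge q_i (v_j - b_i) G(b_i)$. Symmetrically, agent $i$'s best response at $b_i$ against the deviation to $b_j$ reads $(v_i - b_i) B_j(b_i) G(b_i) \ge (v_i - b_j) B_j(b_j) G(b_j)$, and since $b_i \le b_j$ forces $B_j(b_i) \le B_j(b_j)$, the factor $B_j$ can be removed to obtain $(v_i - b_i) G(b_i) \ge (v_i - b_j) G(b_j)$.

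Multiplying the two simplified inequalities and cancelling the positive factor $G(b_i) G(b_j)$ yields the clean product inequality
\[
(v_j - b_j)(v_i - b_i) \;\ge\; q_i (v_j - b_i)(v_i - b_j).
\]
Expanding both sides and collecting the $v_j$ terms gives
\[
v_j\bigl[\, v_i(1 - q_i) - b_i + q_i b_j\,\bigr] \;\ge\; v_i b_j - q_i v_i b_i - (1 - q_i) b_i b_j,
\]
and dividing through by $v_i^2$ to introduce the ratios $b_i/v_i$ and $b_j/v_i$ produces exactly the lemma's right-hand side after dividing by the bracket on the left.

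The main obstacles I anticipate are bookkeeping rather than conceptual. Each dropping of a CDF factor must be direction-preserving: this uses $v_j \ge b_j$ (no overbidding in BNE) on the $j$ side and requires care in the edge case $v_i < b_j$ on the $i$ side, where $(v_i - b_j) < 0$ makes the product inequality trivially satisfied because its left-hand side is nonnegative. Secondly, the final division by $v_i(1-q_i) - b_i + q_i b_j$ is only valid when this quantity is positive; when it is not, one checks directly that the claimed bound is automatic (either the denominator's sign flips both sides or the numerator becomes nonpositive), so the lemma still holds in those regimes.
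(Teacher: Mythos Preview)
Your proposal is correct and follows essentially the same approach as the paper: both arguments combine the two best-response inequalities for $i$ and $j$, factor out the common opponents $k\neq i,j$ via the product of bid CDFs, bound $B_i(b_j)\le 1$ and $B_j(b_i)\le B_j(b_j)$, and arrive at the identical product inequality $(v_i-b_i)(v_j-b_j)\ge q_i(v_i-b_j)(v_j-b_i)$ before rearranging. The only cosmetic difference is that the paper first multiplies (presented as ``dividing'') and then bounds the CDF factors, whereas you simplify each inequality first and then multiply; your explicit handling of the sign and edge cases is in fact more careful than the paper's.
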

\begin{proof}
In what follows, we suppress the dependence of the best response bids $b_i$ and $b_j$ on $v_i$ and $v_j$ for brevity. Because $i$ and $j$ are bidding in equilibrium, we know that $i$ would prefer not to bid $b_j$, and $j$ would prefer not to bid $b_i$. That is:
\begin{align*}
(v_i-b_i)\tilde x_i(b_i)\geq (v_i-b_j)\tilde x_i(b_j)\\
(v_j-b_j)\tilde x_j(b_j)\geq (v_j-b_i)\tilde x_j(b_i).
\end{align*}
Because agents have independent value distributions, we may write the bid allocation rules $\tilde x_i(\cdot)$ and $\tilde x_j(\cdot)$ in terms of the CDFs of each agent's bid distributions, $B_k(\cdot)$. We have that $\tilde x_i(b)=\prod_{k\neq i} B_k(b)$, and $\tilde x_j(b)=\prod_{k\neq j} B_k(b)$. This allows us to divide the best response inequalities above and cancel $B_k(b)$ for any $k\notin\{i,j\}$. In other words, we have
\begin{equation*}
\frac{(v_i-b_i)}{(v_j-b_i)}\frac{B_j(b_i)}{B_i(b_i)}\geq \frac{(v_i-b_j)}{(v_j-b_j)}\frac{B_j(b_j)}{B_i(b_j)}.
\end{equation*}
Let $q_j=F_j(v_j)$, and $q_i=F_i(v_i)$ be the quantiles of agent $i$ and $j$ in their respective value distributions. Since $b_j\geq b_i$ and best response bids must be increasing in value, $B_j(b_i)\leq B_j(b_j)=q_j$. Since we also have $B_j(b_j)=q_j$, $B_i(b_i)=q_i$, and $B_i(b_j)\leq 1$, we have:
\begin{equation*}
\frac{(v_i-b_i)}{(v_j-b_i)}\frac{q_j}{q_i}\geq \frac{(v_i-b_j)}{(v_j-b_j)}q_j.
\end{equation*}
Cancelling $q_j$ from both sides and rearranging yields the lemma.
\end{proof}

The parameterization by quantile of Lemma ~\ref{lem:vjlowerbound} is particularly important: at $q_i=0$, Equation~\eqref{eq:vjlowerbound} gives $v_j\geq b_j$, exactly the bound used in Theorem~\ref{lem:easylb} to arrive at the $\frac{e-1}{e}$ bound. At quantile $q_i=1$, the bound gives $v_j\geq v_i$.

Lemma~\ref{lem:vjlowerbound} lower bounds $v_j$ in terms of the value $v_i$ of agent $i$, the quantile $q_i$ of agent $i$, and the two agents' bids, $b_i(v_i)$ and $b_j(v_j)$. In what follows, we will use denote this lower bound function by 
\begin{equation*}
\underline v(v_i,q_i,b_i,b_j)\equiv v_i \frac{\frac{b_j}{v_i} - (1-q_i)\frac{b_j}{v_i}\frac{b_i}{v_i}-q_i\frac{b_i}{v_i} }{1-q_i-\frac{b_i}{v_i} + \frac{b_j}{v_i}q_i}.
\end{equation*}
The proof of Theorem~\ref{thm:main} amounts to using $\underline v(v_i,q_i,b_i(v_i),b_{i^*}(v_{i^*}))$ instead of $\tau_i(\mathbf v_{-i})$ as a lower bound on the value of the winner $i^*$ in the event $\ihigh$ when $i^*\neq i$, and combining this with the a lower bound on $b_{i^*}$ obtained from applying Lemma~\ref{lem:threshquantile}.
\section{Proof of Theorem~\ref{thm:main}}
\label{sec:mainproof}
We now integrate the sharper bound of Lemma~\ref{lem:vjlowerbound} into the proof framework outlined in Section~\ref{sec:warmup}. We begin by lowerbounding the value of the winner $v_{i^*}$ when $i$ should win with value $v_i$ but loses.

\begin{lemma}\label{lem:newlb}
	For any agent $i$ with value $v_i$ and quantile $F_i(v_i)=q_i$, the following inequality holds:
	\begin{eqnarray}
	&& \condex{}{v_{i^*}}{\ihigh,i^*\neq i}\prob{i^*\neq i\,|\,\ihigh} \label{eq:misallocbound}
	\\ \notag &&\,\,\,\,\,\,\,\,\,\geq v_i\left(1-\tfrac{\utilhigh}{v_i-b_i(v_i)}\right)-(1-q_i)\utilhigh\ln\left(1+\tfrac{v_i-b_i(v_i)-\utilhigh}{(1-q_i)\utilhigh}\right).
	\end{eqnarray}

\end{lemma}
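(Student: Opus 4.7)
The plan is to express the target expectation as an integral over the conditional quantile of the threshold bid, and then apply Lemmas~\ref{lem:vjlowerbound} and~\ref{lem:threshquantile} pointwise inside the integrand. Concretely, conditioned on $\ihigh$, when $i^*\neq i$ the winner's bid equals $\tau_i(\mathbf v_{-i})$; by definition $\tau_i(z,v_i)$ is the $z$-quantile of this threshold in the conditional distribution, and $\bidallochigh$ evaluated at $b=\tau_i(z,v_i)$ is exactly $z$. The event $i^*\neq i$ corresponds to $z\geq\allochigh$, since $\tau_i(z,v_i)\geq b_i(v_i)$ precisely on that range (note $\allochigh=\utilhigh/(v_i-b_i(v_i))$). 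Thus the left-hand side can be rewritten as $\int_{\allochigh}^{1} v_{i^*}(z)\,dz$, where $v_{i^*}(z)$ denotes the value of the winning agent when the conditional threshold quantile is $z$.

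Next, for each such $z$ the winner $i^*$ has $b_{i^*}(v_{i^*})=\tau_i(z,v_i)\geq b_i(v_i)$, so Lemma~\ref{lem:vjlowerbound} applies and gives $v_{i^*}(z)\geq \underline v(v_i,q_i,b_i(v_i),\tau_i(z,v_i))$. Since $z\geq \allochigh$, Lemma~\ref{lem:threshquantile} provides the further bound $\tau_i(z,v_i)\geq v_i - \utilhigh/z$. To chain these two inequalities I need to verify that $\underline v(v_i,q_i,b_i,b_j)$ is nondecreasing in $b_j$; a direct differentiation (treating $\alpha=b_i/v_i$ and $q_i$ as constants) shows the partial derivative equals $(1-q_i)(1-\alpha)^2/D^2\geq 0$, where $D$ is the denominator of $\underline v$. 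With monotonicity in hand, the integrand is lower bounded by $\underline v(v_i,q_i,b_i(v_i),v_i-\utilhigh/z)$.

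The final step is to evaluate the integral $\int_{\allochigh}^{1} \underline v\bigl(v_i,q_i,b_i(v_i),v_i-\utilhigh/z\bigr)\,dz$ in closed form. Writing $\alpha=b_i(v_i)/v_i$ and $X=\allochigh=\utilhigh/(v_i-b_i(v_i))$, the substitution simplifies $\underline v/v_i$ to the clean rational expression $(z-cX)/(z-dX)$ with $c=1-\alpha(1-q_i)$ and $d=q_i$, which equals $1-\tfrac{(1-q_i)(1-\alpha)X}{z-q_iX}$. Integrating from $X$ to $1$ produces the elementary antiderivative $-\,v_i(1-q_i)(1-\alpha)X\,\ln\frac{1-q_iX}{(1-q_i)X}$ added to $v_i(1-X)$. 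Using $v_i(1-\alpha)X=\utilhigh$ and the identity $\tfrac{1-q_iX}{(1-q_i)X}=1+\tfrac{v_i-b_i(v_i)-\utilhigh}{(1-q_i)\utilhigh}$, this matches the target bound exactly.

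The main obstacle is the monotonicity check for $\underline v$ in its fourth argument: without it, substituting the Lemma~\ref{lem:threshquantile} lower bound for $\tau_i(z,v_i)$ would not preserve the inequality. Everything else is bookkeeping, the substitution $\beta\mapsto 1-\utilhigh/(v_i z)$, and a one-variable elementary integral. The algebra collapses cleanly because the factor $(1-\alpha)$ cancels from numerator and denominator once the Lemma~\ref{lem:threshquantile} bound is plugged in; this is what transforms the two-parameter function $\underline v$ into a form whose integral is a single logarithm.
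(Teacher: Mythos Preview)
Your proposal is correct and follows essentially the same route as the paper's proof: bound $v_{i^*}$ by $\underline v(v_i,q_i,b_i,b_{i^*})$ via Lemma~\ref{lem:vjlowerbound}, verify $\partial\underline v/\partial b_j\geq 0$ (you and the paper obtain the identical expression $(1-q_i)(1-\alpha)^2/D^2$), substitute the Lemma~\ref{lem:threshquantile} lower bound for $b_{i^*}=\tau_i(z,v_i)$, and integrate over $z\in[\allochigh,1]$. Your write-up is in fact more explicit than the paper's, which simply asserts ``evaluating the integral yields the inequality''; your cancellation of the common $(1-\alpha)$ factor and the resulting form $1-(1-q_i)(1-\alpha)X/(z-q_iX)$ make the closed-form evaluation transparent.
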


	The proof of the lemma proceeds in four steps. First, we note that $v_{i^*}\geq \vistarlb$. Second, we show that $\vistarlb$ is increasing in $b_{i^*}$. This allows us to substitute the lower bound on $b_{i^*}$ derived in Lemma~\ref{lem:threshquantile} to bound $\vistarlb$	in terms of the quantile $q_i$ of agent $i$'s value and the quantile of agent $i$'s threshold bid. Integrating over the quantiles of agent $i$'s threshold bid yields the result. We defer the full details to the appendix.

\begin{proof}[Proof of Theorem~\ref{thm:main}]
We have lower bounded the value of the winner $v_{i^*}$ conditioned on $\ihigh$ in terms of $v_i$, $\utilhigh$, and $q_i$. As in the proof of Theorem~\ref{thm:oldbound}, we will minimize this lower bound over all possible choices of $\utilhigh$, holding the other two parameters fixed. We will encounter two challenges in applying this approach. First, to eliminate the dependence of the lower bound on $q_i$, we will need to integrate over all quantiles of agent $i$. Second, there will not be a simple worst-case choice of $\utilhigh$, as there was before. We will nonetheless produce a constant by solving the optimization problem numerically.

We begin by restating our lower bound on the welfare of the first-price auction. The total welfare is 
	\begin{eqnarray}\notag
\sum_{i=1}^n \int f_i(v_i)\prob{\ihigh}\Big(v_i x_i(v_i\,|\,\ihigh)+
\condex{}{v_{i^*}}{\ihigh,i^*\neq i}\prob{i^*\neq i\,|\,\ihigh}\Big)\,dv_i.
\end{eqnarray}
Using the lower bound of Lemma~\ref{lem:newlb}, the fact that $\allochigh=\utilhigh/(v_i-b_i(v_i))$, and algebra, we obtain:
\begin{equation*}
\sum_{i=1}^n \int f_i(v_i)\prob{\ihigh}\Big(v_i-\utilhigh (1-q_i)\ln \left(1+\tfrac{v_i-b_i(v_i)\utilhigh}{(1-q_i)\utilhigh}\right)\Big)\,dv_i.
\end{equation*}
Note that this expression is increasing in $b_i(v_i)$. Taking $b_i(v_i)=0$ and rewriting the welfare in terms of agent $i$'s quantile $q_i$, we obtain the lower bound:
\begin{equation*}
\sum_{i=1}^n \int_0^1 \prob{\mathcal E_i(v_i(q_i))}v_i(q_i)\left(1-r_i(q_i)(1-q_i)\ln \left( 1+\frac{1-r_i(q_i)}{(1-q_i)r_i(q_i)} \right)\right)\,dq_i,
\end{equation*}
where $r_i(q_i)=\utilqhigh/v_i(q_i)$. Note that the minimum of 
\begin{equation*}
1-r_i(q_i)(1-q_i)\ln \left( 1+\frac{1-r_i(q_i)}{(1-q_i)r_i(q_i)} \right)
\end{equation*}
over $r_i(q_i)\in[0,1]$ now depends on $q_i$. Define 
\begin{equation*}
\ell(q_i)=\min_{r_i(q_i)\in[0,1]}\left[1-r_i(q_i)(1-q_i)\ln \left( 1+\frac{1-r_i(q_i)}{(1-q_i)r_i(q_i)} \right)\right]
\end{equation*}
We have may lower bound the welfare of the first price auction as:
\begin{equation*}
\sum_{i=1}^n \int_0^1 \prob{\mathcal E_i(v_i(q_i))}v_i(q_i)\ell(q_i)\,dq_i.
\end{equation*}

Finally, let $\gamma_i(q_i)=\prob{\mathcal E_i(v_i(q_i))}v_i(q_i)$. We may lower bound the equilibrium welfare as:
\begin{align*}
\sum_{i=1}^n \int_0^1 \gamma_i(q_i)\ell(q_i)\,dq_i&= \sum_{i=1}^n \int_0^1 \left(\gamma_i'(q_i)\int_{q_i}^1\ell(t)\,dt\right)\,dq_i-\gamma(s)\int_{s}^1\ell(t)\,dt\bigg\rvert_{s=0}^1\\&=\sum_{i=1}^n \int_0^1 \gamma_i'(q_i)\int_{q_i}^1\ell(t)\,dt\,dq_i\\
&\geq \sum_{i=1}^n \int_0^1 \gamma_i'(q_i)\left(\min_x\frac{\int_{x}^1\ell(t)\,dt}{1-x}\right)(1-q_i)\,dq_i\\
&=\left(\min_x\frac{\int_{x}^1\ell(t)\,dt}{1-x}\right)\sum_{i=1}^n \int_0^1 \gamma_i'(q_i)\int_{q_i}^11\, dt\,dq_i\\
&=\left(\min_x\frac{\int_{x}^1\ell(t)\,dt}{1-x}\right)\sum_{i=1}^n \int_0^1 \gamma_i(q_i)\,dq_i.
\end{align*}
The first line follows from integration by parts. The second line follows from the fact that $
\gamma(0)=0$. The fourth line follows again from integration by parts. Since $\sum_{i=1}^n \int_0^1 \gamma_i(q_i)\,dq_i$ is equal to the expected optimal welfare, all that remains is to compute 
\begin{equation*}
\min_x\frac{\int_{x}^1\ell(t)\,dt}{1-x},
\end{equation*}
which will in turn yield our welfare bound. Doing so numerically yields the value stated in the theorem.
\end{proof}
\section{Acknowledgments}
This work was supported by the Shanghai Sailing Program (Grant No. 18YF1407900).
\bibliographystyle{apalike}
\bibliography{bibs}

\appendix
\section{Deferred Proofs}

\begin{proof}[Proof of Lemma~\ref{lem:easylb}]
	Consider a realization of $\mathbf v$ in which $i$ does not win. This means that $i$'s bid is less than its threshold bid $\tau_i(\mathbf v_{-i})$, while the bid of the winner, $i^*$, is at least $\tau_i(\mathbf v_{-i})$. Since in equilibrium agents don't overbid, we have that $v_{i^*}\geq \tau_i(\mathbf v_{-i})$. We therefore have that:
	\begin{equation}\label{eq:easylb}
	\condex{}{v_{i^*}}{\ihigh,i^*\neq i}\prob{i^*\neq i\,|\,\ihigh}\geq \condex{}{\tau_i(\mathbf v_{-i})}{\ihigh,i^*\neq i}\prob{i^*\neq i\,|\,\ihigh}.
	\end{equation}
	To evaluate the quantity on the right-hand side of (\ref{eq:easylb}), we will integrate over the quantile space of agent $i$'s threshold bid $\tau_i(\mathbf v_{-i})$. We obtain the following sequence of inequalities:
	\begin{align}
	\notag \condex{}{\tau_i(\mathbf v_{-i})}{\ihigh,i^*\neq i}\prob{i^*\neq i\,|\,\ihigh}&=\int_{\allochigh}^1\tau_i(z,v_i)\,dz\\
	\notag &\geq \int_{\allochigh}^1 v_i-\frac{\utilhigh}{z}\,dz\\
    \notag &= v_i(1-\allochigh)+\utilhigh\log(\allochigh) \\
    \notag &\geq v_i(1-\allochigh)+\utilhigh\log\left(\frac{\utilhigh}{v_i}\right).
	\end{align}
	The second line follows from the lower bound of Lemma~\ref{lem:threshquantile}, the final from noting $\frac{\utilhigh}{v_i} \leq \allochigh$.
\end{proof}

\begin{proof}[Proof of Lemma~\ref{lem:newlb}]
	
	By definition, $v_{i^*}$ always bids higher than bidder $i$ with value $v_i$, in the event that $i\neq i^*$. We may therefore begin by applying Lemma~\ref{lem:vjlowerbound}, which states that $v_{i^*}\geq \vistarlb$. Hence, we may lower bound the right-hand side of (\ref{eq:misallocbound}) by 
	
	\begin{equation*}\condex{}{\vistarlb}{\ihigh,i^*\neq i}\prob{i^*\neq i\,|\,\ihigh}.\end{equation*}
	
	Next, we show that $\tfrac{\partial \vistarlb}{\partial b_{i^*}}\geq 0$. Computing the derivative and rearranging it yields that
	\begin{equation*}
	\frac{\partial \vistarlb}{\partial b_{i^*}}=\frac{(1-q_i)(1-\tfrac{b_i}{v_i})^2}{(1-q_i-\tfrac{b_i}{v_i}+\tfrac{b_j}{v_i}q_i)^2}\geq 0.
	\end{equation*}
	Note that as long as $i^*$ is outbidding agent $i$, it must be that $b_{i^*}\geq \tau_i(\mathbf v_{-i})$. By Lemma~\ref{lem:threshquantile} we therefore have $b_{i^*}\geq v_i-\tfrac{\utilhigh}{z}$, where $z$ is the quantile of $\tau_i(\mathbf v_{-i})$ in its distribution conditioned on the event $\ihigh$.

	Hence, we may write 
	\begin{align}
	\notag	&\condex{}{\vistarlb}{\ihigh,i^*\neq i}\prob{i^*\neq i\,|\,\ihigh}\\
	\label{eq:lbwithquant}&\geq \condex{}{\underline v\left (v_i,q_i,b_i,v_i-\tfrac{\utilhigh}{z}\right )}{\ihigh,i^*\neq i}\prob{i^*\neq i\,|\,\ihigh}.
	\end{align}
	
	As in the proof of Lemma~\ref{lem:easylb}, we may compute the right-hand side of (\ref{eq:lbwithquant}) by integrating over the quantiles of $\tau_i(\mathbf v_{-i})$. That is, $\condex{}{\vistarlb}{\ihigh,i^*\neq i}\prob{i^*\neq i\,|\,\ihigh}$ is at least 
	\begin{equation*}
	\int_{\tfrac{\utilhigh}{v_i-b_i}}^1\underline v\left (v_i,q_i,b_i,v_i-\tfrac{\utilhigh}{z}\right )\,dz.
	\end{equation*}
	Evaluating the integral yields the inequality stated in the lemma.
\end{proof} 
\end{document}